\newtheorem{theorem}{Theorem}[section]
\newtheorem{lemma}[theorem]{Lemma}
\DeclareMathOperator*{\argmin}{argmin}
\algrenewcommand\algorithmicrequire{\textbf{Input:}}
\algrenewcommand\algorithmicensure{\textbf{Output:}}
\newcommand{\RM}[1]{\textcolor{red}{\bf *** Ruta: #1 *** }}
\algrenewcommand{\Return}{\State\algorithmicreturn~}
\newtheorem{definition}{Definition}[section]
\newtheorem{assumption}{Assumption}[section]
\newtheorem{proposition}[theorem]{Proposition}
\newtheorem{corollary}{Corollary}[theorem]
\newtheorem{observation}[theorem]{Observation}
\newtheorem*{theorem*}{Theorem}
\theoremstyle{example}
\newtheorem*{remark}{Remark}
\theoremstyle{example}
\theoremstyle{example}
\newcommand{\ruta}[1]{\textcolor{red}{\bf *** #1 ***}}
\title{On the Envy-free Allocation of Chores}
\author{Lang Yin\qquad{}Ruta Mehta%
\vspace{1em}\\
University of Illinois Urbana-Champaign%
\vspace{0.0em}\\\normalsize
\texttt{\{\href{mailto:langyin2@illinois.edu}{langyin2},\href{mailto:rutameht@illinois.edu}{rutameht}\}@illinois.edu}%
}
\date{}
\begin{document}

\allowdisplaybreaks

\maketitle

\begin{abstract}
We study the problem of allocating a set of indivisible chores to three agents, among whom two have additive cost functions, in a fair manner. Two fairness notions under consideration are envy-freeness up to any chore (EFX) and a relaxed notion, namely envy-freeness up to transferring any chore (tEFX). In contrast to the case of goods, the case of chores remain relatively unexplored. In particular, our results constructively prove the existence of a tEFX allocation for three agents if two of them have additive cost functions and the ratio of their highest and lowest costs is bounded by two. In addition, if those two cost functions have identical ordering (IDO) on the costs of chores, then an EFX allocation exists even if the condition on the ratio bound is slightly relaxed. Throughout our entire framework, the third agent is unrestricted besides having a monotone cost function.
\end{abstract}
    

\section{Introduction}

The concept of fair division has been recorded in text earlier than two millennia ago in the Bible, where Abraham and Lot were planning to divide the land of Canaan. Since the 1940s, fair division has been prompted as among of the most fundamental problems in mathematical economics and game theory \cite{Steinhaus1948FairDivision} and has been studied extensively during the past decades, {\em e.g.,} see \cite{brams_taylor_1996, Moulin2004MIT, Aziz2016, Ghodsi2018, Caragiannis2016EFX, RePEc:arx:papers:1908.01669, meertens_2009}.

Fair division is the problem of allocating a set $M$ of $m$ items among a set $N$ of $n$ agents in a fair and efficient manner. The items may be beneficial (positively valued by the agents) like cake, cars, etc., in which case they are called goods, or may cause burden (negatively valued by agents) like teaching duties, in which case  they are called chores. In the case of goods, agents' preferences are defined by valuation function, say  $v_i: 2^M \rightarrow \mathbb{R}_{\geq 0}$ for agent $i\in N$, that the agents want to maximize. In the case of chores, agents' preferences are defined by a cost functions measuring the burden of chores, and denoted as $c_i: 2^M \rightarrow \mathbb{R}_{\geq 0}$ instead for agent $i$. Naturally, both $v_i$ and $c_i$ are monotone non-decreasing functions.

The {\em case of goods} has been extensively studied and is relatively batter understood than the {\em case of chores}, even though the latter is equally relevant in practice, for example dividing teaching duties among the faculty members, and sharing the cost for pollution control \cite{Traxler2002FairCD}. In this paper we study the fair division of indivisible chores with additive cost functions under one of the most sought after fairness notion of {\em envy-freeness up to one item (EFX)}. 

{\em Envy-freeness (EF)} is one of the fundamental fairness notions withing economics and social choice theory \cite{Budish2011EF1, Halevi2018, robertson_webb_1998, SEGALHALEVI20171, SEGALHALEVI201819, HaleviAndSziklai2019} (the other being {\em proportionality}) requiring that no agent should {\em envy} other's bundle over their own. Since EF is unachievable with indivisible items, for example allocating one iPhone among two agents who value it highly, various relaxation of EF have been proposed and studied within discrete fair division, such as EF1 and EFX. Among these, EFX is arguably the strongest and most sought-after notion, however is relatively quite difficult to prove existence for (discussed below).

\paragraph{Envy-freeness up to \textit{one} item (EF1)} This notion was introduced by Budish \cite{Budish2011EF1} in the context of dividing goods. An allocation $X=(X_1,\dots,X_n)$ of goods is said to be EF1 if no agent $i$ envies another agent $j$ after the removal of \textit{some} item from \textit{agent $j$'s bundle}, i.e. $v_i(X_i) \geq v_i(X_j \setminus g)$ for \textit{some} $g \in X_j$. In other words, envy is allowed, but for any pair of agents, if there is an item in one's bundle such that once that item is removed, that agent is no longer envied by the other agent. 
Due to Lipton et. al. \cite{Lipton2004EF1Exists} it follows that an EF1 allocation can always be constructed in polynomial time for arbitrary monotone valuations. In case of additive valuations, Bergman, et al. \cite{BergmanEF1andPO} shows that an EF1 and Pareto optimal of goods can be computed in pseudo-polynomial time, and in polynomial time if every agent has a binary value for every good. 

\paragraph{EF1 for the case of chores.} With chores, EF1 is defined sightly differently: An allocation $X$ of chores is said to be EF1 if no agent $i$ envies another agent $j$ after the removal of \textit{some} chore in \textit{i's own bundle}, i.e. $c_i(X_i \setminus b) \leq c_i(X_j)$ for \textit{some} $b \in X_i$. Just like the case of goods, not only an EF1 allocation of chores always exist and can be computed efficiently \cite{Lipton2004EF1Exists, Bhaskar2020EF1Chores}. The problem of EF1+PO allocation remains open for chores, while 
Garg, et al. \cite{Garg2022EF1POchores} showed that if all cost functions are bi-valued, then in polynomial time, one can construct an EF1 and Pareto optimal allocation. However, an EF1 can be highly undesirable: Intuitively, for the case of chores, EF1 requires envy to disappear after the removal of the \textit{most burdensome chore} according to the envying agent from her own bundle. However, in many cases, such envy exists primarily because of an extremely burdensome chore, so the disappearance of envy after removing that particular chore grants no welfare to the stressful agent who was assigned too much work. A symmetric argument explains the problematic nature of this notion for the case of goods. This issue motivates stronger notions of fairness.

\paragraph{Envy-freeness up to \textit{any} item (EFX) for goods} This notion is a relaxation of envy-freeness but more restrictive than EF1, and it was introduced by Caragiannis, et al. \cite{Caragiannis2016EFX}. An allocation $X$ of goods is said to be EFX if no agent $i$ envies another agent $j$ after the removal of \textit{any} item from \textit{agent $j$'s bundle}, i.e. $v_i(X_i) \geq v_i(X_j \setminus g)$ for \textit{any} $g \in X_j$. In contrast to EF1, the envy between any pair of agents $i$ and $j$ must disappear after the removal of the \textit{least valuable item} according to $i$ from $j$'s bundle. This notion is widely favored in the community of fair division. For instance, Caragiannis, et al. \cite{10.1145/3328526.3329574} remark that
\begin{center}
    \textit{"Arguably, EFX is the best fairness analog of envy-freeness of indivisible items".}
\end{center}
Unlike an EF1 allocation which certainly exists, the existence of complete EFX allocations (all items are distributed) for both goods and chores is unknown. As Caragiannis, et al. commented in \cite{Caragiannis2016EFX}, for the case of goods,
\begin{center}
    \textit{"Despite significant effort, we were not able to settle the question of whether an EFX allocation always exists (assuming all goods must be allocated), and leave it as an enigmatic open question".}
\end{center}

Nevertheless, for the case of goods, some impressive progress has been made in recent years, and it is known that a complete EFX allocation always exists when:
\begin{itemize}
    \item There are two agents, or $n$ agents with an identical valuation \cite{PR2020EFX}.
    \item There are three agents with additive valuations \cite{Existencefor3}, or further, with nice-cancelable valuations, which generalizes additive valuations \cite{Berger_Cohen_Feldman_Fiat_2022}.
    \item There are three agents with two general monotone valuations and one {\em MMS-feasible} valuation, which generalizes nice-cancelable valuations.
    \cite{Akrami2022Simplification}.
    \item There are $n$ agents with additive valuations, but each valuation has at most two different values for goods \cite{10.5555/3491440.3491444}.
    \item There are $n$ agents with at most two additive valuations \cite{arxiv.2008.08798}.
\end{itemize}
Even for the case when there are four agents and additive valuations, the existence of an EFX allocation remains open. 

\paragraph{Envy-freeness up to \textit{any} item (EFX) for chores} The case of chores remain even less understood than the case of goods. Like its analog for goods, an allocation $X$ of chores is EFX if no agent $i$ envies another agent $j$ after the removal of \textit{any} chore in \textit{$i$'s own bundle}, i.e. $c_i(X_i \setminus b) \leq c_i(X_j)$ for any $b \in X_i$. In other words, for any agent $i$, this definitions requires $i$ not to envy anyone else after removing the even least burdensome chore from $i$'s bundle. Intuitively, this suggests that for every agent, her bundle of assignments is not significantly more burdensome than others'. Clearly, when there are only two agents with general valuations, an EFX allocation can always be constructed via the cut-and-choose protocol. In 2020, Chen and Liu \cite{arxiv.2005.04864} proved the existence of an EFX allocation of a set containing both goods and chores for $n$ agents with an identical valuation and cost function, which can be constructed using a protocol modified from \cite{PR2020EFX}. Very recently, Li, et al. \cite{Li2022IDOforChores} showed that an EFX allocation always exists when there are $n$ agents whose cost functions share an identical ordering of chores, and Gafni, et al. \cite{arxiv.2109.08671} showed a positive result when there are $n$ agents and for each cost function, a set of chores with a higher cardinality is always more burdensome than a set with a lower cardinality. Nonetheless, even for three agents with additive cost functions, the existence of a complete EFX allocation is still an open problem.

\subsection{Our Contributions}

We study the existence of EFX allocation of chores for three agents when two of them have additive cost functions with further restrictions, while the third agent has general monotone cost function. 

\paragraph{Existence of an EFX allocation under constraints} If two out of the three agents functions share an identical ordering on chores, are additive and evaluates every non-singleton set of chores as more burdensome than any single chore, then a complete EFX allocation exists.

\paragraph{Existence of a tEFX allocation under relaxed constraints} tEFX stands for {\em transfer EFX} and is a slightly weaker notion than EFX. An allocation $X$ said to be tEFX, if the envy of any agent $i$ disappears for any other agent $j$ after {\em transfer} of one chore from $X_i$ to $X_j$. That is, for any $i,j\in N$, $c_i(X_i\setminus \{b\}) \le c_i(X_j\cup \{b\}),\ \forall g\in X_i$. We show existence of tEFX when two out of the three agents have additive cost functions, and for each of them, the cost of her most burdensome chore is at most double of her least burdensome chore, that is $\frac{\max_{b \in M} c_i(\{b\})}{\min_{b \in M} c_i(\{b\})}\le 2$ for $i=1,2$. We note that, if all of the agents have additive cost functions with max-to-min ratio bounded by $2$, {\em i.e.,} $\frac{\max_{c\in M} c_i(\{b\})}{\min_{b \in M} c_i(\{b\})}\le 2,\ \forall i\in N$, then any EF1 allocation would also be a tEFX allocation. However, this does not subsume our result since our third agent can have arbitrary monotone cost function. 

Our approach is mainly inspired by the approach of \cite{Akrami2022Simplification} that moves in the complete allocation space while maintaining a potential function that defined using is values of the {\em primary} agent for her {\em EFX feasible} bundles. 

\section{Preliminaries}

Let $N=\{1,2,3\}$ be a set of three agents and $M$ be a set of $m$ indivisible chores. Each agent $i \in N$ has a cost function $c_i: 2^M \rightarrow \mathbb{R}_{\geq0}$. We assume that:
\begin{itemize}
    \item Every cost function $c_i$ is normalized, i.e. $c_i(\emptyset)=0$;
    \item It is monotone: $S \subseteq T$ implies $c_i(S) \leq c_i(T)$ for any $S, T \subseteq M$;
    \item In addition to the pervious assumption and without loss of generality, $c_1$ and $c_2$ are additive: for any $S \subseteq M$, $c_j(S) = \sum_{b \in S} c_j(b)$ for $j \in \{1,2\}$.
\end{itemize}


To simplify the notations, we write $c_i(b) = c_i(\{b\})$ for every $b \in M$, and $S \sim_i T$ as $c_i(S) \sim c_i(T)$ with $\sim \in \left\{ \leq, \geq, <, > \right\}$. For any $S, T \subseteq M$, we say that $S$ is more costly or more burdensome for $i$ than $T$ if $S >_i T$. Given an allocation $X = \left( X_1, X_2, \cdots, X_n \right)$, $X_i$ is a \textit{bundle} given to the agent $i$, and we say that $i$ \textit{strongly envies} $j$ if there is a chore $b$ in $X_i$, such that $c_i(X_i \setminus b) > c_i(X_j)$. By definition, $X$ is EFX if no agent strongly envies another. For any pair of agents $i$ and $j$, the bundle $X_i$ is said to be \textit{EFX-feasible} for agent $j$ if, once owning $X_i$, agent $j$ does not strongly envy anyone else. By definition, an allocation $X=(X_1,X_2,X_3)$ is an EFX allocation if for every agent $i$, the bundle given to $i$ is EFX-feasible for herself. As discussed earlier, although EFX is widely favored, it might be too restrictive, and this motivates the study of a slightly relaxed notion of envy-freeness:

\begin{definition} \label{def:tEFX}
An allocation $X = \left( X_1, X_2, \cdots, X_n \right)$ is said to be envy-free up to \textit{transferring} any chore (tEFX) if for all $i, j \in N$ and $b \in M$,
\begin{equation*}
    c_i (X_i \setminus b) \leq c_i (X_j \cup b).
\end{equation*}
\end{definition}
In other words, this relaxed definition requires that no agent $i$ envies another agent $j$ after the \textit{transfer} of \textit{any} chore from $i$'s bundle to $j$'s bundle.

In the following of this section, we explain some technical ideas and notions used in some earlier related papers that will be used again here.

\paragraph{Non-degenerate instances:} This notion was introduced by Chaudhury, et al. \cite{Existencefor3} for the case of goods, and interested readers may refer the corresponding definition there. Over here, we define the related concepts similarly, but for the case of chores. Let $\mathcal{C} = \left\{ c_1, \cdots, c_n \right\}$ be the set of cost functions for $n$ agents and $M$ be the set of $m$ chores. We call a triple $I = \langle [n], M, \mathcal{C} \rangle$ an \textit{instance}. An instance $I$ is non-degenerate if and only if no agent evaluates two different sets equally, i.e. for any $i \in [n]$ and $S, T \subseteq M$,
\begin{equation*}
    S \neq T \Longrightarrow c_i(S) \neq c_i(T).
\end{equation*}


In \cite{Existencefor3}, the authors proved a result implying the sufficiency to deal with only non-degenerate instances for the case of goods with three additive valuations. Later, Akrami, et al. \cite{Akrami2022Simplification} extended the same result to the case with $n$ agents with general monotone valuations. \Cref{lemma:nondeg} below shows that the same also holds for the case of chores with general monotone cost functions, and the proof is similar with that in \cite{Akrami2022Simplification}.

Before formally stating and proving \Cref{lemma:nondeg}, we first perturb the original instance $I$ to construct another one. Write $M = \left\{ b_1, \cdots, b_m \right\}$, and for any $\varepsilon>0$, we perturb an instance $I$ to $I_{\varepsilon} = \langle [n], M, \mathcal{C}_{\varepsilon} \rangle$, where for every $c_i \in \mathcal{C}$, we define $c'_i \in \mathcal{C}_{\varepsilon}$ as
\begin{equation*}
    c'_i(S) = c_i(S) + \varepsilon \cdot \left( \sum_{b_j \in S} 2^j \right), \quad \forall S \subseteq M.
\end{equation*}

\begin{lemma} \label{lemma:nondeg}
Given $n$ agents with monotone cost functions $c_1, \cdots, c_n$, define the perturbed instance $I_{\varepsilon}$ as above, and define the quantity
\begin{equation*}
    \delta = \min_{i \in [n]} \min_{S,T: c_i(S) \neq c_i(T)} \abs{c_i(S) - c_i(T)},
\end{equation*}
and let $\varepsilon>0$ be such that $\varepsilon \cdot 2^{m+1} < \delta$. Then,
\begin{itemize}
    \item For any agent $i$ and $S, T \subseteq M$, $c_i(S) > c_i(T)$ implies $c'_i(S) > c'_i(T)$.
    \item $I_{\varepsilon}$ is a non-degenerate instance.
    \item If $X = (X_1, X_2, X_3)$ is an EFX (or tEFX) allocation for $I_{\varepsilon}$, then it is also an EFX (or tEFX) allocation for $I$.
\end{itemize}
\end{lemma}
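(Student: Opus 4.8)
The plan is to derive all three bullet points from one quantitative estimate: the perturbation added to each set, $\varepsilon \sum_{b_j \in S} 2^j$, is strictly smaller than the least nonzero gap $\delta$ between distinct cost values, so it can never reverse a strict inequality in $c_i$. The first bullet is the engine, and the other two drop out as corollaries. The preliminary observation I would record is the crude bound on the perturbation: for every $S \subseteq M$ we have $0 \le \sum_{b_j \in S} 2^j \le \sum_{j=1}^m 2^j = 2^{m+1}-2 < 2^{m+1}$, so by the hypothesis $\varepsilon \cdot 2^{m+1} < \delta$ the perturbation of any single set lies in $[0,\delta)$.

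For the first bullet, suppose $c_i(S) > c_i(T)$. Then $c_i(S) - c_i(T)$ is a nonzero gap for agent $i$, hence at least $\delta$ by the definition of $\delta$. Writing
\begin{equation*}
    c'_i(S) - c'_i(T) = \big(c_i(S) - c_i(T)\big) + \varepsilon\Big(\sum_{b_j \in S} 2^j - \sum_{b_j \in T} 2^j\Big),
\end{equation*}
the first summand is $\ge \delta$, while the second summand has absolute value at most $\varepsilon(2^{m+1}-2) < \varepsilon\cdot 2^{m+1} < \delta$ by the bound above. Hence the whole expression is strictly positive, giving $c'_i(S) > c'_i(T)$.

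For non-degeneracy (second bullet) I would take $S \neq T$ and split into two cases. If $c_i(S) \neq c_i(T)$, then applying the first bullet to whichever of $S,T$ has the larger $c_i$-value already yields $c'_i(S) \neq c'_i(T)$. If instead $c_i(S) = c_i(T)$, then the displayed identity collapses to $c'_i(S) - c'_i(T) = \varepsilon\big(\sum_{b_j \in S} 2^j - \sum_{b_j \in T} 2^j\big)$, and since distinct subsets of $\{b_1,\dots,b_m\}$ carry distinct binary encodings $\sum_{b_j \in \cdot}\,2^j$, this difference is nonzero. In either case $c'_i(S) \neq c'_i(T)$, so $I_\varepsilon$ is non-degenerate. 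For the third bullet I would argue by contrapositive: if $X$ fails to be EFX for $I$, some agent $i$ strongly envies some $j$, i.e.\ $c_i(X_i \setminus b) > c_i(X_j)$ for some $b \in X_i$; the first bullet upgrades this to $c'_i(X_i \setminus b) > c'_i(X_j)$, contradicting that $X$ is EFX for $I_\varepsilon$. The tEFX case is verbatim after replacing $X_j$ by $X_j \cup b$ throughout.

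I do not expect a genuine obstacle: the only content is the estimate in the first bullet, and everything downstream is a case split or a contrapositive. The points that merit care are (i) verifying that $2^{m+1}$ is a legitimate uniform upper bound for every subset-weight, which is where finiteness of $M$ enters, and (ii) the uniqueness of binary encodings, which is what makes the tie-breaking in the $c_i(S)=c_i(T)$ case work. A minor subtlety worth flagging is that $\delta$ is a well-defined positive number precisely because $M$ is finite (so only finitely many pairs $(S,T)$ occur) and the minimum ranges only over pairs with $c_i(S)\neq c_i(T)$; if some $c_i$ were constant it contributes no such pair and is simply omitted from the minimum, which does not affect the argument.
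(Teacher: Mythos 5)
Your proposal is correct and follows essentially the same route as the paper's proof: bound the perturbation term by $\varepsilon\cdot 2^{m+1}<\delta$ to get the first bullet, then derive non-degeneracy from uniqueness of the binary encodings and the EFX/tEFX preservation by contradiction/contrapositive via the first bullet. The only cosmetic difference is that the paper cancels the common elements and bounds $\sum_{b_j\in T\setminus S}2^j$ while you bound the full difference of subset weights; both estimates are valid.
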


\begin{proof}
For the first statement, suppose $S >_i T$, then we have
\begin{align*}
    c'_i(S) - c'_i(T) & = c_i(S) - c_i(T) + \varepsilon \cdot \left( \sum_{b_j \in S \setminus T} 2^j - \sum_{b_j \in T \setminus S} 2^j \right)\\
    & \geq \delta - \varepsilon \cdot \left( \sum_{b_j \in T \setminus S} 2^j \right)\\
    & \geq \delta - \varepsilon \cdot (2^{m+1} - 1) > 0.
\end{align*}
For the second statement, consider any two sets $S, T \subseteq M$ such that $S \neq T$. If $c_i(S) \neq c_i(T)$, then by the first statement, we have $c'_i(S) \neq c'_i(T)$. If $c_i(S) = c_i(T)$, then
\begin{equation*}
    c'_i(S) - c'_i(T) = \varepsilon \cdot \left( \sum_{b_j \in S \setminus T} 2^j - \sum_{b_j \in T \setminus S} 2^j \right) \neq 0,
\end{equation*}
because $S \neq T$ so their elements have different indices. Therefore, $I_{\varepsilon}$ is non-degenerate.

For the last statement, suppose $X$ is an EFX allocation in $I_{\varepsilon}$ but not EFX in $I$, then there exists a pair of agents $i$ and $j$, and $b \in X_i$ such that $c_i(X_i \setminus b) > c_i(X_j)$. By the first statement, this implies that $c'_i(X_i \setminus b) > c'_i(X_j)$ so it is not EFX in $I_{\varepsilon}$, which is a contradiction. The proof for the tEFX case is the same, since $c_i(X_i \setminus b) > c_i(X_j \cup b)$ also implies $c'_i(X_i \setminus b) > c'_i(X_j \cup b)$.
\end{proof}

Thanks to this result, from now on we can only consider non-degenerate instances. Consequently, every chore has a positive and distinct cost under every cost function.

Next, we introduce some definitions on cost functions, which are imperative in later chapters.

\begin{definition}
We define the following properties of cost functions.
\begin{itemize}
    \item A cost function $c: 2^M \rightarrow \mathbb{R}_{\geq 0}$ is \textit{collective} if for any $S \subseteq M$ such that $S$ contains at least two chores, then $c(S) > c(b)$ for any $b \in M$.
    \item For any $\alpha>0$, a cost function $c$ is \textit{$\alpha$-ratio-bounded} if
    \begin{equation*}
        \frac{\max_{b \in M} c(b)}{\min_{b \in M} c(b)} \leq \alpha.
    \end{equation*}
    \item Two cost functions $c$ and $c'$ are \textit{IDO} if they share an \textit{identical ordering} among the costs of chores, i.e. for any $b, b' \in M$, $c(b) > c(b')$ if and only if $c'(b) > c'(b')$.
\end{itemize}
\end{definition}

One may easily observe that for an additive cost function, if it is 2-ratio-bounded, then it must be collective, but not conversely: Consider a set $M = \{b_1, b_2, b_3\}$ of three chores and an additive cost function $c$ such that $c(b_1)=0.2$, $c(b_2)=0.3$, and $c(b_3)=0.49$; clearly $c$ is collective, but $c(b_3)/c(b_1)>2$.

\section{Technical Overview}

In this section, we briefly illustrate the main technical insights we use to show our results, which are mainly inspired by those in \cite{Akrami2022Simplification}. We present an algorithmic proof for the existence of (i) an EFX allocation and (ii) a tEFX allocation under different assumptions on two of the three cost functions. Through our entire framework, the third agent is unrestricted besides having a monotone cost function. 
Similar with Akrami, et al. \cite{Akrami2022Simplification}, for the EFX case, we rely on improving a potential function in the space of \textit{complete} \textbf{non}-EFX allocations, instead of moving in the space of \textit{partial} EFX allocations like in \cite{Existencefor3} and \cite{Chaudhury2021Charity}. For the tEFX case, our iterating steps are similar, but a potential function is not needed because the algorithm may only complete at most $m$ iterations.

The central idea, which will be applied on the EFX case, is to decrease the output of a positive potential function while iterating over complete non-EFX allocations of the chores satisfying certain invariant. Because they are complete, if during this process we obtain an EFX allocation, then the algorithm stops and we are done. Otherwise, the iteration continues and the potential keeps going down. Be aware that there are finitely many possible allocations because our task is to allocate finitely many chores to finitely many agents, and thus eventually the potential can no longer be decreased. Readers may notice from the process that during every iteration, the more costly bundle between $X_1$ and $X_2$ according to 1 must be relaxed in terms of workload, for the purpose of improving the potential. 

More specifically, for both the EFX and tEFX cases, if a (t)EFX allocation has not been found, then we keep constructing an allocation $X = (X_1, X_2, X_3)$ that satisfies the following invariant:
\begin{enumerate}
    \item Agent 1 finds both $X_1$ and $X_2$ (t)EFX-feasible.
    \item Agent 2 finds $X_3$ (t)EFX-feasible.
\end{enumerate}
Note that such an allocation always exists: as elaborated in \cite{Li2022IDOforChores}, we may always find an EFX allocation with respect to agent 1's cost function $c_1$ alone. Next, without loss of generality, assume $X_3$ is agent 2's favorite bundle, i.e. $c_2(X_3) = \min_{j \in N} c_2(X_j)$, which is certainly (t)EFX-feasible for her, and hence both properties hold. The key insight of the iterating steps is that, after each update on $X$, the new allocation still satisfies the invariant.

The following arguments are symmetric to those in \cite{Akrami2022Simplification}. If $X_3$ is not the only (t)EFX-feasible bundle for both 2 and 3, i.e. if any of agents 2 and 3 finds at least one of $X_1$ or $X_2$ (t)EFX-feasible, then $X$ is a (t)EFX allocation. Again, suppose $X_3$ is (t)EFX-feasible for agent 2 without loss of generality, and we do a simple case analysis:
\begin{itemize}
    \item If agent 2 finds $X_1$ or $X_2$ (t)EFX-feasible, then we let agent 3 pick her favorite bundle among all three, which is certainly (t)EFX-feasible for her. If she picks $X_3$, then we give $X_1$ or $X_2$ to agent 2, whichever (t)-EFX feasible for her; if agent 3 picks $X_1$ or $X_2$, then we give $X_3$ to agent 2. Recall that all three bundles are (t)EFX-feasible for agent 1, so in both cases she can take the last bundle.
    \item If agent 3 finds $X_1$ or $X_2$ (t)EFX-feasible, then we may let her pick that, agent 2 pick $X_3$, and agent 1 pick the remaining one.
\end{itemize}
Therefore, the only unresolved scenario is when $X_3$ is the only (t)EFX-feasible bundle for both agents 2 and 3. Intuitively, this implies that 2 and 3 find both $X_1$ and $X_2$ "too burdensome". 

To summarize our approach to solve the EFX case in non-technical terms, we will assume agent 1 evaluates $X_1$ more costly than $X_2$, remove some chores from $X_1$, and allocate them to $X_3$ to \textit{balance} the costs between $X_3$ and the $X_1$. Next, based upon agent 1's cost function $c_1$, we may choose to augment the relieved $X_1$ with 1's least costly chore in $X_2$, depending on which bundle, between $X_1$ and $X_2$, that 1's least costly chore among all chores is located at. We will show that, after each iteration, thanks to the assumptions on $c_1$ and $c_2$, this protocol guarantees the decrease of potential with the new bundle, which also satisfies the invariant.

For the tEFX case, the similar approach will be applied, and we will show the protocol outputs a tEFX allocation in polynomial time, even if we overall significantly relax the assumptions on $c_1$ and $c_2$ by removing the IDO assumption and slightly strengthening the collectivity assumption.

More techniques in \cite{Akrami2022Simplification} served to prove a strong result on the existence of EFX allocations for goods, but many of them are, to our best knowledge, are not directly applicable for the case of chores, just like those in \cite{Existencefor3}. 

\section{EFX Existence for Two IDO Cost Functions}

\begin{algorithm}[t] \label{algoinCh4}
	\caption{Finding an EFX Allocation}
	\begin{algorithmic}[1]
	\Require 
	    \begin{enumerate}
	        \item Three cost functions $c_1, c_2, c_3$ that satisfy Assumptions \ref{assum:additive} - \ref{assum:ido}.
	        \item The set $M$ containing $m$ chores.
	        \item An allocation $X=(X_1,X_2,X_3)$ of $M$ such that:
	        \begin{itemize}
	            \item $X_1$ and $X_2$ are EFX-feasible for agent 1;
	            \item $X_3$ is EFX-feasible for agent 2.
	        \end{itemize}
	        \item The costs of certain bundles for certain agents; in particular we need: $c_1(X_1), c_1(X_2), c_2(X_3)$.
	        \item The costs of all chores according to agents 1 and 2, i.e. the set of costs $\{c_i(b_j)\}$ for $i \in N$, $j \in [m]$.
	    \end{enumerate}
	\Ensure An EFX allocation $Y=(Y_1,Y_2,Y_3)$
		\While {$X$ is not an EFX allocation}
		    \State Label agent 1's more burdensome bundle between $X_1$ and $X_2$ as $X_1$, and the other as $X_2$.
		    \State Label agent 1's least costly chore in $X_1$ as $d$, and her least costly chore in $X_2$ as $d'$.
		    \If {According to 1, the least costly chore $b$ in $X_3$ is less costly than $d$ according to 1}
		        \State Transfer $d$ from $X_1$ to $X_3$, and $b$ from $X_3$ to $X_1$
		        \State $X_1 = \left( X_1 \setminus d \right) \cup b$, $X_3 = \left( X_3 \cup d \right) \setminus b$
		    \Else
		        \If {$d$ is agent 1's least costly chore in $M$}
		            \State Transfer $d$ from $X_1$ to $X_3$.
		            \State $X_1 = X_1 \setminus d$, $X_3 = X_3 \cup d$.
		        \ElsIf {$d'$ is agent 1's least costly chore in $M$}
		            \State First transfer $d$ from $X_1$ to $X_3$, and then transfer $d'$ from $X_2$ to $X_1$.
		            \State $X_1 = \left( X_1 \setminus d \right) \cup d'$, $X_2 = X_2 \setminus d'$, $X_3 = X_3 \cup d$.
		        \EndIf
		    \EndIf
		\EndWhile
		\Return the final allocation denoted as $Y = (Y_1, Y_2, Y_3)=(X_1,X_2,X_3)$ when the loop terminates.
	\end{algorithmic}
\end{algorithm}

Specifically, our Algorithm 1 iterates over a subclass of complete allocations of the set of chores $M$ over the three agents. The subclass only contains allocations with the following two properties:
\begin{itemize}
    \item $X_1$ and $X_2$ are EFX-feasible for agent 1.
    \item $X_3$ is EFX-feasible for agent 2.
\end{itemize}
Such a partition certainly exists: We can produce a partition $X = (X_1, X_2, X_3)$ such that all bundles are EFX-feasible for agent 1. Then, agent 2 picks her favorite bundle out of the three, say $X_3$, and we label the other two bundles as $X_1$ and $X_2$; clearly, the invariant holds. Note that, under the invariant, we may assume that $X_3$ is the only EFX-feasible bundle for agents 2 and 3 without loss of generality. If $X_1$ or $X_2$ is EFX-feasible for agent 2, then we may let agent 3 pick her favorite bundle; if that bundle is $X_1$ or $X_2$, then let agent 2 pick $X_3$ and agent 1 pick the last bundle; on the other hand, if agent 3's favorite is $X_3$, then let agent 2 pick the other favorite bundle of hers and let agent 1 pick the last one. Likewise, if agent 3 finds at least one of $X_1$ and $X_2$ EFX-feasible, then let her that bundle, agent 2 pick $X_3$ and agent 1 pick the last one. In either case, $X$ is already EFX and we are done.

For a partition $X$ labelled in the method above, we define our potential function as $\phi(X) = \max \left\{ c_1(X_1), c_1(X_2) \right\}$. The remaining part of this section focuses on the process of modifying $X$ and decreasing the potential when $X$ is not an EFX allocation. 

Before presenting more technical results, we first make a few simple observations on the status of $X_3$ for agents 2 and 3, and they hold without more assumptions on $c_1$ and $c_2$ which will be presented later in this section. For any agent $i \in N$, label $b_{i,1}$ as the least costly chore in $X_1$ according to $i$'s evaluation. Then,

\begin{lemma} \label{obs:3isminfor2and3}
If $X$ satisfies the invariant and $X_3$ is the only EFX-feasible bundle for both agents 2 and 3, then for $i \in \{2,3\}$, $c_i(X_3) = \min \left\{ c_i(X_1), c_i(X_2), c_i(X_3) \right\}$.
\end{lemma}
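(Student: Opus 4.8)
The plan is to exploit the simple fact that holding one's own cheapest bundle precludes strong envy. Concretely, I would first isolate an auxiliary claim that needs none of the extra hypotheses: for any agent $i \in N$, whichever bundle among $X_1, X_2, X_3$ that $i$ finds least costly is EFX-feasible for $i$. This is immediate from monotonicity. If $c_i(X_k) = \min_{j} c_i(X_j)$, then for every other bundle $X_\ell$ and every chore $b \in X_k$ we have $c_i(X_k \setminus b) \le c_i(X_k) \le c_i(X_\ell)$, so an agent holding the least costly bundle strongly envies nobody. Crucially, this argument invokes only monotonicity, so it applies verbatim to the unrestricted agent $3$ as well as to the additive agent $2$.

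With the claim in hand the lemma is essentially one line. Fix $i \in \{2,3\}$ and let $X_k$ be a bundle minimizing $c_i(\cdot)$ over $\{X_1, X_2, X_3\}$. By the claim, $X_k$ is EFX-feasible for $i$; but the hypothesis states that $X_3$ is the \emph{only} EFX-feasible bundle for $i$, so we are forced to conclude $k = 3$, that is, $c_i(X_3) = \min\left\{ c_i(X_1), c_i(X_2), c_i(X_3) \right\}$.

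The only point needing a touch of care is the uniqueness of the minimizer. Since we have reduced to non-degenerate instances via \Cref{lemma:nondeg}, the three bundles carry pairwise distinct costs under $c_i$, so the minimizer is unique and the identification $k = 3$ is unambiguous; even absent this, any minimizing bundle is EFX-feasible and must therefore coincide with the unique EFX-feasible bundle $X_3$. I anticipate no genuine obstacle: the entire content is the recognition that the cheapest bundle is always safe. For completeness I would note the more pedestrian alternative of arguing by contradiction — assuming $c_i(X_3) > c_i(X_1)$, the failure of $X_1$ to be EFX-feasible for $i$ forces $c_i(X_1) > c_i(X_2)$ (its other disjunct $c_i(X_1 \setminus b) > c_i(X_3)$ being excluded by monotonicity), and then the failure of $X_2$ to be EFX-feasible for $i$ contradicts both resulting inequalities — but the least-costly-bundle observation is cleaner and sidesteps the case split entirely.
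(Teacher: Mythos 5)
Your proposal is correct and follows essentially the same route as the paper: the paper likewise observes that an agent's least costly bundle is always EFX-feasible (by monotonicity alone) and concludes by contradiction that this bundle must be $X_3$ for agents 2 and 3. Your write-up merely makes the auxiliary claim explicit, which if anything is slightly cleaner than the paper's terse version.
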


\begin{proof}
We assumed that neither agent 2 nor 3 finds any of $X_1$ and $X_2$ EFX-feasible. Nevertheless, every agent must always have at least one EFX-feasible bundle, including her favorite bundle. If the statement does not hold, i.e. for at least one $i \in \{2,3\}$, there is a $j \in \{1,2\}$ such that $c_i(X_j) < c_i(X_3)$, implying that $X_j$ is EFX feasible for agent $i$, which contradicts the assumption.
\end{proof}

In fact, in this scenario, $X_3$ is by far the most favorable choice for both agents 2 and 3. The next observation further strengthens this claim beyond \Cref{obs:3isminfor2and3}.

\begin{lemma} \label{obs:burdenafterremoval}
If $X$ satisfies the invariant and $X_3$ is the only EFX-feasible bundle for both agents 2 and 3, then for $i \in \{2,3\}$, $X_3 <_i \min_i \left\{ X_1 \setminus b_{i,1}, X_2 \setminus b_{i,2} \right\}$. Equivalently, if $c_i$ is also additive, then $X_3 \cup b_{i,j} <_i X_j$.
\end{lemma}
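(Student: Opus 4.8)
The plan is to obtain both inequalities packaged in $X_3 <_i \min_i\{X_1\setminus b_{i,1}, X_2\setminus b_{i,2}\}$ from a single mechanism: the failure of $X_1$ and of $X_2$ to be EFX-feasible for agent $i\in\{2,3\}$, fed into the fact that $X_3$ is agent $i$'s cheapest bundle, which is exactly \Cref{obs:3isminfor2and3}. So the only earlier result I really lean on is \Cref{obs:3isminfor2and3}, plus the definitions of strong envy and EFX-feasibility and the non-degeneracy convention.

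First I would unwind EFX-feasibility. Owning a bundle $B$, agent $i$ fails to strongly envy $X_k$ exactly when $c_i(B\setminus b)\le c_i(X_k)$ for \emph{every} $b\in B$; hence $B$ is \emph{not} EFX-feasible for $i$ precisely when some chore $b\in B$ and some index $k$ satisfy $c_i(B\setminus b)>c_i(X_k)$. Since by hypothesis $X_3$ is the only EFX-feasible bundle for $i$, the bundle $X_1$ is not EFX-feasible, so there is a witnessing chore $b\in X_1$ and an index $k\neq 1$ with $c_i(X_1\setminus b)>c_i(X_k)$. Now I invoke \Cref{obs:3isminfor2and3}, which gives $c_i(X_3)\le c_i(X_k)$ for every $k$ because $X_3$ is $i$'s minimum-cost bundle; chaining, $c_i(X_3)<c_i(X_1\setminus b)$ for the witness $b$. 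The same argument applied to $X_2$ produces a witness in $X_2$, and non-degeneracy makes every comparison strict.

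The remaining step is to replace the witnessing chore $b$ by the specifically labeled least-costly chore $b_{i,1}$ (resp. $b_{i,2}$). For this I would use additivity of $c_i$: the map $b\mapsto c_i(X_1\setminus b)=c_i(X_1)-c_i(b)$ is maximized exactly at the least-costly chore, so $c_i(X_1\setminus b_{i,1})\ge c_i(X_1\setminus b)>c_i(X_3)$, upgrading the witness to $b_{i,1}$. Performing this for both $X_1$ and $X_2$ and taking the cheaper remainder gives the stated form $X_3<_i\min_i\{X_1\setminus b_{i,1},X_2\setminus b_{i,2}\}$. The equivalent reformulation is then pure additive algebra: $c_i(X_3)<c_i(X_j\setminus b_{i,j})=c_i(X_j)-c_i(b_{i,j})$ rearranges to $c_i(X_3\cup b_{i,j})=c_i(X_3)+c_i(b_{i,j})<c_i(X_j)$, i.e. $X_3\cup b_{i,j}<_i X_j$, for each $j\in\{1,2\}$.

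The step I expect to be delicate is exactly this last upgrade for agent $3$, whose cost is only monotone. For a general monotone $c_i$ the chore whose removal leaves the most burdensome remainder need not be the least-costly singleton, so the identity $\argmax_{b} c_i(X_1\setminus b)=b_{i,1}$ can fail and the clean chain above only yields the weaker ``some witnessing chore'' statement $c_i(X_3)<\max_{b\in X_1}c_i(X_1\setminus b)$. I would therefore guard the sharp least-costly-chore form to the additive agents (agent $2$, and agent $3$ in the case its cost happens to be additive), and either carry only the weaker witness form for a purely monotone agent $3$ or check that the downstream use of the lemma in Algorithm~1 invokes the inequality solely through the additive functions $c_1,c_2$. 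Correctly isolating this monotone-versus-additive distinction is the main obstacle; the rest is bookkeeping with \Cref{obs:3isminfor2and3} and additivity.
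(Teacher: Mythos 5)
Your argument is correct and is essentially the paper's own proof run in the contrapositive direction: the paper assumes $\min_i \left\{ X_1 \setminus b_{i,1}, X_2 \setminus b_{i,2} \right\} <_i X_3$ and concludes that the minimizing bundle would then be EFX-feasible for agent $i$ (contradicting that $X_3$ is the only such bundle), using exactly your two ingredients, namely \Cref{obs:3isminfor2and3} and the fact that removing the least costly chore leaves the most costly remainder. Your caveat about agent $3$'s merely monotone cost is a genuine subtlety that the paper's proof (and the remark following \Cref{assum:ido}) glosses over, since the step ``by definition of $b_{i,2}$, $X_2$ is EFX-feasible'' relies on the same additivity-dependent identification of the least costly chore with the maximizer of $b \mapsto c_i(X_j \setminus b)$; as you suspected, this is harmless downstream because the lemma is only ever invoked with $i=2$, where $c_2$ is additive.
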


\begin{proof}
As shown in \Cref{obs:3isminfor2and3}, $X_3$ must be 2 and 3's favorite bundle. Clearly, if $c_i$ is additive and the first statement holds, then the second statement is trivial. We prove the first statement when $i=2$; the proof for $i=3$ is symmetric. Let us assume otherwise and without loss of generality, suppose $X_2 \setminus b_{3,2} <_2 X_1 \setminus b_{3,1}$, then
\begin{equation}
    X_1 >_2 X_3 >_2 \min_2 \left\{ X_1 \setminus b_{3,1}, X_2 \setminus b_{3,2} \right\} =_2 X_2 \setminus b_{3,2}.
\end{equation}
Then by definition of $b_{3,2}$, $X_2$ is EFX-feasible for agent 3 and we obtain a contradiction.
\end{proof}

Next, to show the existence of EFX allocations, we unfortunately have to implement more assumptions on two of the three cost functions. Without loss of generality, we add the following three restrictions on $c_1$ and $c_2$.

\begin{assumption} \label{assum:additive} (Additivity).
$c_1$ and $c_2$ are additive, while $c_3$ can still be any monotone cost function.
\end{assumption}

\begin{assumption} \label{assum:collective} (Collectivity).
For $i \in \{1,2\}$, $c_i(M') > \displaystyle \max_{b \in M} c_i(b)$ for any subset $M' \subseteq M$ containing at least two chores.
\end{assumption}

\begin{assumption} \label{assum:ido} (Identical ordering / IDO).
$c_1$ and $c_2$ share an identical ordering on individual chores: $b >_1 b'$ if and only if $b >_2 b'$.
\end{assumption}


\begin{remark}
Note that both \Cref{obs:3isminfor2and3} and \Cref{obs:burdenafterremoval} hold as long as all three cost functions are monotone; they do not even have to be additive. We will see shortly that our proofs strictly require that two of the three cost functions satisfy all three assumptions concurrently. Therefore, assuming additivity on the third agent is unnecessary.
\end{remark}

Because of \Cref{assum:ido}, agent 1 and agent 2 share a common least costly chore in every bundle and entire $M$. Write $d = \argmin_{x \in X_1} c_{1,2}(x)$ as their least costly chore in $X_1$. Without loss of generality, we assume $X_1 >_1 X_2$, and consider the partition $\hat{X} = \{ X_1 \setminus d, X_2, X_3 \cup d\}$, produced by transferring $d$ from $X_1$ to $X_3$. Furthermore, observe that $\phi(\hat{X}) = \max_1 \left\{ X_1 \setminus d, X_2 \right\} < c_1(X_1) = \phi(X)$.

This entire section is devoted to verify the following statement:
\begin{theorem}
Given three agents $N$, their cost functions $\mathcal{C} = \{ c_1, c_2, c_3 \}$, and a finite set of chores $M$, if $c_1$ and $c_2$ satisfy Assumptions \ref{assum:additive}-\ref{assum:ido}, then an EFX allocation exists.
\end{theorem}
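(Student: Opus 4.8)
The plan is to prove the theorem constructively, by showing that the procedure in the displayed algorithm, started from any allocation satisfying the invariant, terminates at an EFX allocation. The two ingredients are a monovariant (the potential $\phi(X)=\max\{c_1(X_1),c_1(X_2)\}$, which I will force to strictly decrease) and an invariant (that $X_1,X_2$ are EFX-feasible for agent $1$ and $X_3$ is EFX-feasible for agent $2$). First I would record that an initial allocation satisfying the invariant exists: take any allocation all of whose bundles are EFX-feasible for agent $1$ (these exist by the single-agent argument cited for $c_1$), let agent $2$ keep her favorite bundle as $X_3$, and label the rest as $X_1,X_2$. Then I would dispose of every configuration except the ``unresolved'' one: if either agent $2$ or agent $3$ finds any of $X_1,X_2$ EFX-feasible, the case analysis already given in the text produces an EFX allocation directly. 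So it suffices to treat the case where $X_3$ is the unique EFX-feasible bundle for both agents $2$ and $3$, where \Cref{obs:3isminfor2and3} and \Cref{obs:burdenafterremoval} apply.

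Next I would handle termination, the easy half. In each branch of the update the heavy bundle $X_1$ (with $X_1>_1 X_2$) loses its cheapest chore $d$ and gains at most a strictly cheaper chore, while the other non-$X_3$ bundle is only shrunk; by additivity (\Cref{assum:additive}) both new $c_1$-costs are strictly below $c_1(X_1)=\phi(X)$, so $\phi$ strictly decreases. Since there are finitely many allocations, $\phi$ takes finitely many values, so the loop must exit, and it exits only when $X$ is EFX. The entire content of the proof is therefore the claim that \emph{each update preserves the invariant}.

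That claim splits into two halves. Feasibility of $X_3'$ for agent $2$ is the more routine half: in every branch $X_3$ gains $d$, which by \Cref{assum:ido} is simultaneously the cheapest chore of $X_1$ for both agents $1$ and $2$; combining \Cref{obs:burdenafterremoval} (which, with additivity, yields $c_2(X_3\cup d)<c_2(X_1)$ and $c_2(X_3\cup d')<c_2(X_2)$) with collectivity (\Cref{assum:collective}) to bound the residual obtained by deleting the new cheapest chore of $X_3'$, one checks that $X_3'$ stays below $X_1'$ and $X_2'$ after any single deletion. Collectivity is what controls the swap branch, where $X_3$ also loses a chore: the inequality ``(sum of any two chores) $>$ (any single chore)'' supplies exactly the slack needed to absorb the removed chore.

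The main obstacle is the other half: keeping both non-$X_3$ bundles EFX-feasible for agent $1$. The relieved bundle $X_1'$ is easy, since it only lost cost and the grown $X_3'$ only helps; the real difficulty is that the \emph{untouched} (lighter) bundle may come to \emph{strongly envy the relieved} $X_1'$ under $c_1$, so one must show $c_1(X_2\setminus d')\le c_1(X_1')$ even though $X_1$ was just made lighter. This does not follow from EFX-feasibility alone, and it is exactly here that all three assumptions must be used together. I would first use \Cref{obs:burdenafterremoval} with collectivity to establish the structural fact that in the unresolved scenario $|X_1|,|X_2|\ge 3$ and $|X_3|\ge 2$ (a single-chore residual would be cheaper than $X_3$ for agent $2$, contradicting infeasibility), which rules out the naive counterexamples. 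Then I would argue that collectivity bounds the dispersion of chore costs, so that relieving $X_1$ by its cheapest chore, and, in the branch where the global minimum lies in $X_2$, re-augmenting $X_1$ with that minimum as the case split dictates, cannot drop $c_1(X_1')$ below the EFX-residual $c_1(X_2\setminus d')$ of the other bundle. Making this margin estimate rigorous in each of the three branches, especially the branch that moves the globally cheapest chore from $X_2$ into $X_1$, is the step I expect to be most delicate, and is where the proof genuinely needs \Cref{assum:collective} and \Cref{assum:ido} rather than additivity alone.
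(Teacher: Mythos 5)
Your skeleton agrees with the paper's: the same invariant, the same potential $\phi(X)=\max\{c_1(X_1),c_1(X_2)\}$, the same reduction to the case where $X_3$ is the only EFX-feasible bundle for agents $2$ and $3$, and the same termination-by-finiteness argument. But the entire content of the theorem lives in the step you defer --- verifying that each branch of the update preserves the invariant --- and the machinery you propose for it does not work. First, the structural claim $|X_1|,|X_2|\ge 3$ and $|X_3|\ge 2$ is not justified: your parenthetical argument needs $|X_3|\ge 2$ to invoke \Cref{assum:collective} against a single-chore residual, but nothing forces $|X_3|\ge 2$ (the unresolved scenario is perfectly consistent with $X_3$ being a singleton or even empty), so the deduction is circular. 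Second, and more importantly, the assertion that ``collectivity bounds the dispersion of chore costs'' misreads \Cref{assum:collective}: the paper's own example after the definition ($c(b_1)=0.2$, $c(b_3)=0.49$) shows a collective additive function whose max-to-min ratio exceeds $2$, so no dispersion or margin estimate of the kind you describe is available. What collectivity actually supplies is the inequality $c_1(\{f,b\})>c_1(d)$ for any two-chore set against any single chore, and that is used pointwise: in the swap branch it gives $c_1(f)+c_1(b)-c_1(d)>0$, which turns a hypothetical strong envy of $X_2$ toward $X_1'$ into the contradiction $c_1(X_2)>c_1(X_1)$.

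The deepest divergence is in the branch where the global minimum $d'$ lies in $X_2$. You set out to \emph{prove} $c_1(X_2\setminus d')\le c_1(X_1\setminus d)$, but this inequality can simply fail, and the paper's algorithm does not claim otherwise: it tests for the failure, and when $X_2\setminus d'>_1 X_1\setminus d$ it transfers $d'$ into $X_1$. The verification then uses that very failure inequality as a hypothesis --- since $d'$ is the global minimum it is the cheapest chore of the new $X_1'=(X_1\setminus d)\cup d'$, so $X_1'\setminus d'=X_1\setminus d<_1 X_2\setminus d'=X_2'$ certifies $X_1'$ against $X_2'$, and collectivity applied to $\{d',f\}$ certifies $X_2'$ against $X_1'$. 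This also shows your remark that ``the relieved bundle $X_1'$ is easy, since it only lost cost'' is not quite right: in this branch $X_1'$ gains a chore while $X_2'$ shrinks, and its feasibility rests on the failure hypothesis, not on monotonicity of costs. So the proposal as written has a genuine gap: the case analysis that constitutes the proof is replaced by an estimate that is both unproved and, in the form stated, unavailable under the stated assumptions.
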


We prove this theorem by dividing into two cases with respect to the costs of items in $X_3$. In particular, we consider the case if there is a chore in $X_3$ that is less costly than $d = \argmin_{x \in X_1} c_{1,2}(x)$ according to agents 1 and 2, and the case otherwise. Note that because of \Cref{assum:ido} and the assumption of non-degeneracy, a chore must be either strictly less or strictly more costly than $d$ for both agents 1 and 2 concurrently.

\subsection*{Case 1: there exists a $b \in X_3$ such that $b <_{1,2} d$}

This is the simpler case and a desirable allocation can be achieved in one step. Consider the partition $X'$ where $X'_1 = \left( X_1 \setminus d \right) \cup b$, $X'_2 = X_2$, and $X'_3 = \left( X_3 \cup d \right) \setminus b$.

\begin{proposition}
$X'$ satisfies the invariant, i.e. $X'_1$ and $X'_2$ are EFX-feasible for agent 1, and $X'_3$ is EFX-feasible for agent 2.
\end{proposition}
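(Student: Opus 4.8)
The plan is to verify the invariant component by component, leveraging the fact that in Case 1 we transfer $d$ out of $X_1$ and bring in the strictly cheaper chore $b$ (with $b <_{1,2} d$), so both the cost of $X_1$ drops and the cost of $X_3$ cannot grow by much. First I would handle agent 1's two conditions. Since $c_1$ is additive (Assumption~\ref{assum:additive}) and $b <_1 d$, we have $c_1(X'_1) = c_1(X_1) - c_1(d) + c_1(b) < c_1(X_1)$, and since $X_1 >_1 X_2$ was assumed and $X'_2 = X_2$ is unchanged, $X'_2$ remains EFX-feasible for agent 1 exactly as before. For $X'_1$, the key point is that removing $d$ and adding the \emph{even cheaper} chore $b$ only lowers agent 1's cost, so EFX-feasibility of $X'_1$ for agent 1 should follow from the fact that $X_1$ was already EFX-feasible together with monotonicity of the feasibility condition under cost decrease; I would make this precise by checking that for any chore in $X'_1$, removing it still leaves a cost no larger than $c_1$ of either other bundle.

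The more delicate part is showing $X'_3 = (X_3 \cup d) \setminus b$ is EFX-feasible for agent 2, because agent 2's cost of the third bundle can \emph{increase} (we swap out $b$ and swap in the more expensive $d$). Here I would invoke \Cref{obs:burdenafterremoval}: under the invariant with $X_3$ the only EFX-feasible bundle for agents $2,3$, we have $X_3 \cup b_{2,j} <_2 X_j$ for $j \in \{1,2\}$, i.e. agent 2 has substantial slack between $X_3$ and both other bundles. Using additivity of $c_2$ (Assumption~\ref{assum:additive}) and the IDO assumption (Assumption~\ref{assum:ido}) — which guarantees $b <_2 d$ as well, since the ordering is shared — I would bound $c_2(X'_3) = c_2(X_3) + c_2(d) - c_2(b)$. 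The task is to show that after this increase, agent 2 still does not strongly envy $X'_1$ or $X'_2$; that is, for every chore $e \in X'_3$, $c_2(X'_3 \setminus e) \le \min\{c_2(X'_1), c_2(X'_2)\}$. The worst case is removing the cheapest chore in $X'_3$ according to agent 2.

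I expect the main obstacle to be controlling the increase in $c_2(X'_3)$ against the decrease in $c_2(X'_1)$ simultaneously. Note $X'_1$ also changed for agent 2: its cost went from $c_2(X_1)$ to $c_2(X_1) - c_2(d) + c_2(b)$, which \emph{decreased}, so the envy target $X'_1$ got cheaper at the same time $X'_3$ got more expensive — a doubly adverse movement. The crucial counterweight is the slack quantified in \Cref{obs:burdenafterremoval}, namely $X_3 \cup b_{2,1} <_2 X_1$, which after transferring the cheapest chore guarantees a gap of at least one chore's worth of cost. I would combine this with the ratio-free collectivity structure and the fact that $d$ is the globally cheapest chore in $X_1$ for both agents (so $c_2(d) - c_2(b) > 0$ is small relative to that slack) to close the inequality. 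The symmetric bookkeeping for $X'_2$ (unchanged for agent 2) should follow from the same slack bound applied with $j=2$.
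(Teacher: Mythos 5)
Your overall skeleton (check each bundle separately, use \Cref{obs:burdenafterremoval} to get slack for agent 2) matches the paper's, but there are two genuine gaps. First, your claim that ``$X'_2 = X_2$ is unchanged, so it remains EFX-feasible for agent 1 exactly as before'' does not follow: EFX-feasibility of $X'_2$ is tested against $X'_1$, and $X'_1$ became \emph{strictly cheaper} for agent 1 than $X_1$ was, so the condition $c_1(X_2 \setminus f) \le c_1(X'_1)$ is strictly harder than the one you inherited from $X$. The paper closes this by contradiction using \Cref{assum:collective}: if some $f \in X_2$ had $c_1(X_2 \setminus f) > c_1(X'_1) = c_1(X_1) - c_1(d) + c_1(b)$, then $c_1(X_2) > c_1(X_1) + \bigl(c_1(f) + c_1(b) - c_1(d)\bigr) \ge c_1(X_1)$, since collectivity gives $c_1(f) + c_1(b) > c_1(d)$; this contradicts $X_1 >_1 X_2$. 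Your proposal never invokes collectivity here, and without it the step fails.

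Second, for $X'_3$ your mechanism is off. You propose to show the increase $c_2(d) - c_2(b)$ is ``small relative to the slack'' from \Cref{obs:burdenafterremoval}, but that slack has no quantitative lower bound, so this comparison cannot be closed as stated. The paper's argument is simpler and sidesteps the issue entirely: for any $g \in X'_3$, collectivity gives $c_2(b) + c_2(g) > c_2(d)$, hence $c_2(X'_3 \setminus g) = c_2(X_3) + c_2(d) - c_2(b) - c_2(g) < c_2(X_3)$, i.e.\ after removing \emph{any} chore the bundle is cheaper for agent 2 than the original $X_3$ was, and then $c_2(X_3) < \min\{c_2(X'_1), c_2(X'_2)\}$ follows from \Cref{obs:burdenafterremoval} (using that $d$ is also agent 2's cheapest chore in $X_1$ by IDO). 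You gesture at ``the ratio-free collectivity structure,'' but the specific inequality $c_2(b) + c_2(g) > c_2(d)$ is the whole proof, and identifying it is what's missing. Your framing of ``the worst case is removing the cheapest chore'' is also the wrong lens: the point is that removal of any chore, combined with the swap, nets a decrease below $c_2(X_3)$.
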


\begin{proof}
Clearly, $X'_1 <_1 X_1$, $X'_3 >_1 X_3$, and $X'_2=X_2$, so $X'_1$ must be EFX-feasible for agent 1. If $X'_2 = X_2$ is not EFX-feasible for agent 1, i.e. there exists a chore $f \in X_2$ such that $X_2 \setminus f >_1 X'_1$, then we have
\begin{equation}
    c_1 (X'_2 \setminus f) = c_1(X_2) - c_1(f) > c_1(X_1) - c_1(d) + c_1(b);
\end{equation}
hence,
\begin{equation}
    c_1(X_2) > c_1(X_1) + \underbrace{c_1(f) + c_1(b) - c_1(d)}_{>0} \geq c_1(X_1),
\end{equation}
and this is a contradiction of the assumption $X_1 >_1 X_2$.

For the second claim, observe that for every $g \in X'_3$, we have
\begin{align}
\begin{split}
    c_2(X'_3 \setminus g) & = c_2(X_3) + \underbrace{c_2(d) - c_2(b) - c_2(g)}_{<0}\\
    & < c_2(X_3) < \min \left\{ c_2(X'_1), c_2(X'_2) \right\},
\end{split}
\end{align}
where the first inequality holds because of \Cref{assum:collective} and the second inequality follows from \Cref{obs:burdenafterremoval}.
\end{proof}
Finally, observe that $\phi(X') = \max_1 \left\{ X'_1, X'_2 \right\} < \phi(X)$ because $\phi(X) = c_1(X_1) > c_1(X_2)$ and $X'_1 <_1 X_1$. Therefore, after this one-step operation, we decrease the potential using $X'$ which satisfies the invariant.


\subsection*{Case 2: for every $b \in X_3$, $b >_{1,2} d$}

Note that we may assume $c_i(b) \neq c_i(d)$ for every $i \in N$ and every $b \in X_3$, because of the assumption of non-degeneracy. Following the assumption that $b >_{1,2} d$ for every $b \in X_3$, it is clear that the unique least costly chore in $M$ for both agents 1 and 2 is not in $X_3$. We divide this case into two sub-cases, depending on the location of this least costly chore, whether it lies in $X_1$ or $X_2$. In both cases, the least costly chore $d' = \argmin_{x \in X_2} c_{1,2}(x)$ for agents 1 and 2 in $X_2$ will appear, as illustrated in Algorithm 1.

\subsubsection*{Case 2.1: the minimum chore lies in $X_1$}
In this case, $d$ is the least costly chore for agents 1 and 2 in $M$ and thus $d' >_{1,2} d$. we claim that the partition $\left \{X_1 \setminus d, X_2, X_3 \cup d \right\}$ is enough by proving it satisfies the invariant. Indeed, for every $f \in X_2$, we have
\begin{equation}
    X_2 \setminus f \leq_1 X_2 \setminus d' <_1 X_1 \setminus d' <_1 X_1 \setminus d.
\end{equation}
Moreover, for any $e \in X_3 \cup d$, $(X_3 \cup d) \setminus e \leq_2 X_3 <_2 \min_2 \left\{X_1 \setminus d, X_2 \right\}$, where the last inequality holds because of \Cref{obs:3isminfor2and3} and \Cref{obs:burdenafterremoval} by setting $i=2$ and $j=1$. Apparently, as we assumed $X_1$ is EFX-feasible for agent 1 within $X$, we have $X_1 \setminus d <_1 X_2$, so the potential of this new allocation is $c_1(X_2)$, which is assumed to be lower than $\phi(X) = c_1(X_1)$. We will then analyze the case if the minimum chore lies in $X_2$.

\subsubsection*{Case 2.2: the minimum chore lies in $X_2$}
Clearly, in this case, $d'$ is the unique least costly chore for agents 1 and 2 among all chores in $M$. Once again, within the partition $\hat{X} = \left \{X_1 \setminus d, X_2, X_3 \cup d \right\}$, $X_1 \setminus d$ is apparently EFX-feasible for agent 1. We claim that $X_3 \cup d$ is EFX-feasible for agent 2. Indeed, for every $e \in X_3 \cup d$, we have $(X_3 \cup d) \setminus e <_2 X_3 <_2 \min_2 \left\{ X_1 \setminus d, X_2 \right\}$, thanks to \Cref{obs:burdenafterremoval}. If $X_2$ is EFX-feasible for agent 1, we are done and start the next round. Suppose not, then note that the issue is irrelevant with $X_3 \cup d$: $X_2$ was EFX-feasible for agent 1 within $X$, so for any $e \in X_2$, $X_2 \setminus e <_1 X_3 <_1 X_3 \cup d$. Thus, if $X_2$ is not EFX-feasible for agent 1 within $\hat{X}$, we must have $X_2 \setminus d' >_1 X_1 \setminus d$. To fix this, consider the allocation $X' = \left\{ \left( X_1 \setminus d \right) \cup d', X_2 \setminus d', X_3 \cup d \right\}$, and we claim that this partition satisfies the invariant.

Note that, in $X'$, clearly $d' = \argmin_{x \in X'_1} c_{1,2}(x)$, and therefore it is the least costly chore in the entire $M$ according to both agents 1 and 2. As a result, $X'_1 \setminus d' = X_1 \setminus d <_1 X_2 \setminus d'$, where the last inequality holds because we assumed $X_2 = \hat{X}_2$ is not EFX-feasible within $\hat{X}$ and $d' = \argmin_{x \in \hat{X}_2} c_{1,2}(x)$. Therefore, $X'_1$ is EFX-feasible for agent 1 within $X'$. For any $f \in X_2 \setminus d'$, observe that $X'_2 \setminus f = X_2 - d' - f <_1 X_2 \setminus d <_1 X_1 \setminus d <_1 X'_1$. Finally, for every $e \in X_3 \cup d$, since we assumed $e \geq_{1,2} d$, we have
\begin{align}
\begin{split}
    c_2(X'_3 \setminus e) \leq c_2(X_3) & < \min \left\{ c_2(X_1 \setminus d), c_2(X_2 \setminus d') \right\}\\
    & \leq \min \left\{ c_2(X'_1), c_2(X'_2) \right\}
\end{split}
\end{align}
so $X'_3$ is EFX-feasible for agent 2.

Finally, easily observe that $X'_1 <_1 X_1$ and $X'_2 <_1 X_2$, so $\phi(X') < \phi(X)$.

We have covered all possible cases, and conclude that we may find an allocation that satisfies the invariant and reduces the potential under all scenarios. Therefore, because the potential function is bounded below, the algorithm will certainly stop eventually when the potential can no longer decrease and return an EFX allocation. However, recall that $X_1$ may both receive and send chores, so even if the potential decreases at every round, the quantity of change may be very small. Consequently, this algorithm is pseudo-polynomial time only.

\section{tEFX Existence for Two 2-ratio-bounded Cost Functions}

\begin{algorithm}[t] \label{algoinCh4}
	\caption{Finding a tEFX Allocation}
	\begin{algorithmic}[1]
	\Require 
	    \begin{enumerate}
	        \item Three cost functions $c_1, c_2, c_3$ that satisfy Assumptions \ref{assum:additive} and \ref{assum:c1c2ratiobound}.
	        \item The set $M$ containing $m$ chores.
	        \item An allocation $X=(X_1,X_2,X_3)$ of $M$ such that:
	        \begin{itemize}
	            \item $X_1$ and $X_2$ are EFX-feasible (thus tEFX-feasible) for agent 1;
	            \item $X_3$ is EFX-feasible (thus tEFX-feasible) for agent 2.
	        \end{itemize}
	        \item The costs of certain bundles for certain agents; in particular we need: $c_1(X_1), c_1(X_2), c_2(X_3)$.
	        \item The costs of all chores according to agents 1 and 2, i.e. the set of costs $\{c_i(b_j)\}$ for $i \in N$, $j \in [m]$.
	    \end{enumerate}
	\Ensure A tEFX allocation $Y=(Y_1,Y_2,Y_3)$
		\While {$X_3$ is still agent 2's favorite bundle}
		    \State Label agent 1's more burdensome bundle between $X_1$ and $X_2$ as $X_1$, and the other as $X_2$.
		    \State Label agent 2's least costly bundle in $X_1$ as $d'$.
		    \State Transfer $d'$ from $X_1$ to $X_3$.
		    \State $X_1 = X_1 \setminus d'$, $X_3 = X_3 \cup d'$.
		\EndWhile
		\Return the final allocation denoted as $Y = (Y_1, Y_2, Y_3)=(X_1,X_2,X_3)$ when the loop terminates.
	\end{algorithmic} 
\end{algorithm}

In this section, we aim to find an allocation $\{ X_1, X_2, X_3 \}$ such that every $X_i$ is \textbf{tEFX-feasible} for agent $i$, where the definition is given in \Cref{def:tEFX}.

All three cost functions are still assumed to be monotone, and further $c_1$ and $c_2$ must be additive. However, instead of collectivity, we will assume $c_1$ and $c_2$ are 2-ratio-bounded, which is slightly more restrictive than collectivity, as explained at the end of Section 2. The new assumption is formally phrased as:

\begin{assumption} \label{assum:c1c2ratiobound}
For $i \in \{1,2\}$, $\frac{\max_{b \in M} c_i(b)}{\min_{b \in M} c_i(b)} \leq 2$.
\end{assumption}

Once again, our proof requires two of the three functions to be additive and 2-ratio-bounded concurrently, so it is sufficient to relax the third agent.
%
In short, the only assumptions in this section are Assumptions \ref{assum:additive} and \ref{assum:c1c2ratiobound} (no IDO assumption).

From previous arguments, we may generate an allocation $X=(X_1, X_2, X_3)$ such that every bundle is EFX-feasible (not only tEFX-feasible) for agent 1. Then let $X_3$ be the least costly bundle for agent 2, and label the remaining two bundles as $X_1$ and $X_2$. Clearly this allocation satisfy the following invariant which will be maintained during the protocol.
\begin{enumerate}
    \item $X_1$ and $X_2$ are tEFX-feasible for agent 1; and
    \item $X_3$ is the most favorable bundle for agent 2, i.e. $X_3 <_2 \min_2 \left\{ X_1, X_2 \right\}$.
\end{enumerate}
Like in the previous section, suppose the invariant holds, then without loss of generality, we may assume the following:
\begin{itemize}
    \item $X_1 >_1 X_2$;
    \item $X_3$ is the only tEFX-feasible bundle for both agents 2 and 3: Otherwise, we are done by a symmetric argument of the beginning of Section 4.
\end{itemize}

The next lemma follows from the above observation. 

\begin{lemma}
For $i \in \{2,3\}$, if $X$ satisfies the invariant and the assumption that $X_3$ is the only tEFX-feasible bundle for both agents 2 and 3, then $c_i(X_3) = \min \left\{ c_i(X_1), c_i(X_2), c_i(X_3) \right\}$.
\end{lemma}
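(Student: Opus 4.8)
The plan is to prove this lemma by essentially the same argument as \Cref{obs:3isminfor2and3}, since the statement is the tEFX-analogue of that earlier observation and the proof structure should transfer verbatim. The key fact I will use is that the tEFX-feasibility of a bundle is a property that any agent's favorite (least costly) bundle automatically satisfies: if $X_j$ is the minimum-cost bundle for agent $i$, then for any chore $b \in X_j$ and any other bundle $X_k$, we have $c_i(X_j \setminus b) \le c_i(X_j) \le c_i(X_k) \le c_i(X_k \cup b)$, so agent $i$ never strongly envies after a transfer. Hence every agent always has at least one tEFX-feasible bundle, namely her favorite one.

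First I would fix an index $i \in \{2,3\}$ and argue by contradiction. Suppose the claim fails for this $i$, i.e. $c_i(X_3) \ne \min\{c_i(X_1), c_i(X_2), c_i(X_3)\}$. Since $X_3$ is tEFX-feasible for agent $i$ by hypothesis, and since any strict minimizer would be even more favorable, the failure of the equality means there exists some $j \in \{1,2\}$ with $c_i(X_j) < c_i(X_3)$. I would then show this forces $X_j$ to be tEFX-feasible for agent $i$: for any chore $b \in X_j$ and any bundle $X_k$ with $k \ne j$, the transfer inequality $c_i(X_j \setminus b) \le c_i(X_j) < c_i(X_3) \le c_i(X_k) \le c_i(X_k \cup b)$ holds, using monotonicity of $c_i$ together with the invariant fact that $X_3$ is agent 2's (and by the standing assumption, the uniquely tEFX-feasible, hence most favorable) bundle. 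This contradicts the standing assumption that $X_3$ is the \emph{only} tEFX-feasible bundle for agent $i$.

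The only mild subtlety, which I would be careful to address, is pinning down why $c_i(X_j) < c_i(X_3)$ suffices to give tEFX-feasibility of $X_j$ against \emph{both} remaining bundles rather than just $X_3$. For $i = 2$ this is immediate from the invariant $X_3 <_2 \min_2\{X_1, X_2\}$, which orders $X_3$ below the other two bundles; so if $X_j$ beats $X_3$ it beats all bundles. For $i = 3$ I would invoke the analogue of \Cref{obs:3isminfor2and3} in spirit: since $X_3$ is the unique tEFX-feasible bundle for agent 3, any bundle agent 3 prefers to $X_3$ would itself be tEFX-feasible, again a contradiction. I do not expect any real obstacle here; the lemma is a direct restatement of \Cref{obs:3isminfor2and3} with EFX-feasibility replaced by tEFX-feasibility, and the proof is a one-line contradiction argument resting on the observation that an agent's least costly bundle is always (t)EFX-feasible. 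The main thing to get right is simply that tEFX-feasibility, like EFX-feasibility, is automatically satisfied by a minimizer, which makes the contrapositive clean.
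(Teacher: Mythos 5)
Your argument is correct and is essentially the paper's own: the paper gives no separate proof for this lemma (it merely notes that it ``follows from the above observation''), and the intended reasoning is exactly your contradiction argument that an agent's least costly bundle is always tEFX-feasible, mirroring the proof of \Cref{obs:3isminfor2and3}. The one point needing care --- which you do flag in your final paragraph --- is that the feasibility argument must be applied to agent $i$'s \emph{minimizer} among $\{X_1,X_2\}$ rather than to an arbitrary bundle cheaper than $X_3$, since for $i=3$ there is no invariant ordering the other two bundles.
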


The goal of this section is to prove the following theorem.

\begin{theorem} \label{thminsec5}
Given a set $N$ of three agents, their cost functions $\mathcal{C} = \{ c_1, c_2, c_3 \}$, and a finite set of chores $M$, if $c_1$ and $c_2$ are additive and satisfy \Cref{assum:c1c2ratiobound}, then a tEFX allocation can be found in polynomial time $O(m)$.
\end{theorem}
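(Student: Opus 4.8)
\textbf{Proof proposal for \Cref{thminsec5}.}

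The plan is to analyze Algorithm 2 and show that (i) the invariant is preserved by each iteration of the while-loop, (ii) each iteration strictly reduces the cardinality of $X_1$ (so at most $m$ iterations occur), and (iii) when the loop terminates, the allocation is tEFX. Throughout we work under the standing assumption that $X_3$ is the only tEFX-feasible bundle for both agents $2$ and $3$; otherwise the symmetric case analysis from the start of Section 4 already produces a tEFX allocation, and the loop guard ``$X_3$ is still agent 2's favorite bundle'' fails. So first I would dispose of the termination-condition bookkeeping, noting that a single chore transfer from $X_1$ to $X_3$ per round means the polynomial running time $O(m)$ is immediate once we verify the loop cannot run more than $m$ times.

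The heart of the argument is maintaining the invariant after transferring $d'$ (agent 2's least costly chore in $X_1$) from $X_1$ to $X_3$. For agent 1's side, I would show $X_1 \setminus d'$ and $X_2$ remain tEFX-feasible for agent 1. Since $X_1 \setminus d' <_1 X_1$ and $X_2 = X_2$ while $X_3$ only grows, the bundle $X_1 \setminus d'$ stays tEFX-feasible for agent 1 (transferring any chore out of it only helps). The delicate point is $X_2$: we must rule out that after shrinking $X_1$, agent 1 now strongly envies the transfer $X_2 \to (X_1\setminus d')$, i.e. we need $c_1(X_2 \setminus f) \le c_1((X_1 \setminus d') \cup f)$ for all $f \in X_2$. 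Here is exactly where \Cref{assum:c1c2ratiobound} does the work: because every chore costs at most twice the minimum, we have $c_1(f) \le 2 c_1(d') $ is \emph{not} quite what we want directly; instead I would argue from $X_1 >_1 X_2$ together with the $2$-ratio bound to show the potential mismatch $c_1(f) - c_1(d')$ that could break tEFX is bounded by a single chore's cost and hence absorbed by the ``$\cup f$'' on the right-hand side and the ``$\setminus f$'' on the left. For agent 2's side, I would verify $X_3 \cup d'$ stays tEFX-feasible for agent 2: we must show $c_2((X_3 \cup d') \setminus e) \le c_2(X_j \cup e)$ for $j \in \{1,2\}$ and all $e$. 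Since $d'$ was agent 2's \emph{cheapest} chore in $X_1$ and the $2$-ratio bound controls the gap between $d'$ and any other chore, transferring $d'$ keeps $X_3 \cup d'$ cheap enough relative to $X_1 \setminus d'$ and $X_2$; the previous lemma guaranteeing $c_2(X_3) = \min\{c_2(X_1),c_2(X_2),c_2(X_3)\}$ provides the needed slack.

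The main obstacle I anticipate is precisely the agent-$2$ tEFX-feasibility of $X_3 \cup d'$ across iterations: as $X_3$ accumulates chores round after round, its cost grows, and we must be sure it never overtakes (in the tEFX sense) the shrinking $X_1$ or the fixed $X_2$. The key quantitative claim to establish is that the loop \emph{must halt} before $X_3$ becomes too large --- equivalently, that as soon as $X_3$ ceases to be agent 2's favorite bundle we already have a tEFX allocation, and that this happens while $X_3 \cup d'$ is still tEFX-feasible. I expect the $2$-ratio bound to be invoked in the form: moving the cheapest chore $d'$ out of the strictly-heavier bundle $X_1$ cannot make $X_1 \setminus d'$ lighter than $X_2$ by more than one chore's worth, so the ``heavier'' label either stays on $X_1$ or swaps to $X_2$ in a controlled way, and the transfer direction remains well-defined. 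Finally, upon termination the guard fails, meaning some bundle among $X_1, X_2$ is now at least as favorable for agent 2 as $X_3$; combined with all three bundles being tEFX-feasible for agent 1, the standard reassignment (agent 3 picks her favorite, agent 2 takes a tEFX-feasible bundle, agent 1 takes the remainder) yields a complete tEFX allocation, completing the proof.
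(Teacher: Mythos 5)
Your overall architecture matches the paper's: the same single-transfer loop (move $d'=\argmin_{x\in X_1}c_2(x)$ from the heavier-for-agent-1 bundle into $X_3$), the same invariant, the same $O(m)$ termination count via the fact that $X_3$ only ever receives chores, and the correct realization that agent-2 feasibility of $X_3\cup d'$ need only be certified at the moment the loop guard fails rather than maintained as an invariant. However, the two load-bearing inequalities are left as intentions ("I would argue\dots", "I expect the $2$-ratio bound to be invoked in the form\dots"), and the mechanism you sketch for the second one is not the one that works. For agent 1's side (\Cref{prop:tEFX-both-cost-ratio}), the actual computation is short: if $X_2\setminus q >_1 (X_1\setminus d')\cup q$ with $q=\argmin_{x\in X_2}c_1(x)$, then $c_1(X_2) > c_1(X_1) + 2c_1(q) - c_1(d') \geq c_1(X_1)$, where the last step is exactly $c_1(d')\le 2c_1(q)$ from \Cref{assum:c1c2ratiobound}; this contradicts $X_1>_1X_2$. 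You point in this direction but never write the inequality, and note it is the transfer of $q$ (agent 1's cheapest chore in $X_2$), not of an arbitrary $f$, that one tests.

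The more serious gap is the termination case. When $X_3\cup d' >_2 \min_2\{X_1\setminus d', X_2\}$, you must first rule out $X_1\setminus d' <_2 X_2$ (otherwise $X_1$ was already tEFX-feasible for agent 2 in $X$, contradicting the standing assumption that $X_3$ is her only feasible bundle) — a case split you omit. That forces $X_2$ to be agent 2's favorite in $X'$, hence tEFX-feasible for her, and the remaining claim is \Cref{prop:X3istEFXwhenX2Lowest}: $X_3\cup d'$ is \emph{still} tEFX-feasible for agent 2. The paper proves this by sandwiching $X_3 <_2 X_2 <_2 X_3\cup d'$, writing $c_2(X_2)=c_2(X_3)+\delta\, c_2(d')$ with $\delta\in(0,1)$, observing that the cheapest chore $f$ of $X_3\cup d'$ is not $d'$, and deriving from a hypothetical tEFX violation that $c_2(f) < \tfrac{1-\delta}{2}c_2(d') < \tfrac12 c_2(d')$, contradicting the $2$-ratio bound. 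Your proposed use of the ratio bound here ("moving the cheapest chore out of the heavier bundle cannot make it lighter than $X_2$ by more than one chore's worth") concerns the relabeling of $X_1$ and $X_2$ and does not address why $X_3\cup d'$ survives the tEFX test against $X_2$; without the $\delta$-sandwich argument the proof does not close. Two minor inaccuracies: it is $\abs{X_1\cup X_2}$, not $\abs{X_1}$, that decreases (the labels swap between rounds), and at termination only $X'_1$ and $X'_2$ are known tEFX-feasible for agent 1 — which suffices because agent 2 has both $X'_2$ and $X'_3$ feasible, but "all three bundles tEFX-feasible for agent 1" is not established.
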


To proceed, we need to fix the notations for various "minimum" chores. Let $d = \argmin_{x \in X_1} c_1(x)$, $d' = \argmin_{x \in X_1} c_2(x)$, and $q = \argmin_{x \in X_2} c_1(x)$. By definition, $d' \geq_1 d$.

Consider the partition $X' = \left\{ X_1 \setminus d', X_2, X_3 \cup d' \right\}$, constructed by moving the least costly chore in $X_1$ according to agent 2 to $X_3$. 

\begin{proposition} \label{prop:tEFX-both-cost-ratio}
Let $X=(X_1,X_2,X_3)$ be an allocation where $X_1$ and $X_2$ are tEFX-feasible for agent 1 in $X$, then $X'_1=X_1 \setminus d'$ and $X'_2=X_2$ are tEFX-feasible for agent 1 within $X'$.
\end{proposition}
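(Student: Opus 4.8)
The plan is to unfold tEFX-feasibility of the two bundles into four separate inequalities and verify each by a short additive computation. Since $c_1$ is additive, every tEFX constraint of the shape $c_1(S \setminus b) \le c_1(T \cup b)$ with $b \in S$, $b \notin T$ rewrites as $c_1(S) - c_1(T) \le 2c_1(b)$, so for a fixed pair of bundles it suffices to check the single binding case where $b$ is the $c_1$-cheapest chore of $S$. Writing $d = \argmin_{x \in X_1} c_1(x)$, $q = \argmin_{x \in X_2} c_1(x)$, and recalling $d' \geq_1 d$, I would exploit that $X'$ differs from $X$ only by relocating $d'$ from $X_1$ into $X_3$, so that $c_1(X'_1) = c_1(X_1) - c_1(d')$ and $c_1(X'_3) = c_1(X_3) + c_1(d')$, while $c_1(X'_2) = c_1(X_2)$. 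The input guarantees that $X_1$ and $X_2$ are tEFX-feasible for agent $1$ in $X$, giving me the two starting bounds $c_1(X_1) - c_1(X_2) \le 2c_1(d)$ and $c_1(X_1) - c_1(X_3) \le 2c_1(d)$; bundles that become empty or singleton after a removal make the relevant constraint vacuous.

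For $X'_1$ I would check feasibility against $X'_2 = X_2$ and against $X'_3$. Against $X_2$ the binding chore is $e := \argmin_{x \in X'_1} c_1(x)$, and since $X'_1 \subseteq X_1$ one has $c_1(e) \ge c_1(d)$; combining the original bound with the removed $c_1(d') \ge 0$ yields $c_1(X'_1) - c_1(X_2) \le 2c_1(d) - c_1(d') \le 2c_1(d) \le 2c_1(e)$, as required. Against $X'_3$ I would use $c_1(d') \ge c_1(d)$ to absorb the two copies of $d'$ that now sit in the difference of bundles, computing $c_1(X'_1) - c_1(X'_3) = c_1(X_1) - c_1(X_3) - 2c_1(d') \le 2c_1(d) - 2c_1(d') \le 0 \le 2c_1(b)$ for every $b \in X'_1$.

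For $X'_2 = X_2$ the comparison against $X'_3 = X_3 \cup d'$ is immediate from monotonicity: enlarging $X_3$ to $X'_3$ only weakens the right-hand side, so the original feasibility of $X_2$ against $X_3$ transfers verbatim. The remaining comparison, $X'_2$ against $X'_1$, is the crux and the only place \Cref{assum:c1c2ratiobound} enters. Here the binding constraint sits at $b = q$, and additivity turns feasibility into $c_1(X_2) - c_1(X_1) \le 2c_1(q) - c_1(d')$. The left-hand side is negative because $X_1 >_1 X_2$, so it suffices to show $c_1(d') \le 2c_1(q)$; this is exactly the $2$-ratio bound, since $c_1(d') \le \max_{b \in M} c_1(b) \le 2\min_{b \in M} c_1(b) \le 2c_1(q)$.

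I expect the $X'_2$-versus-$X'_1$ step to be the main obstacle, precisely because $d'$ is chosen to minimize agent $2$'s cost rather than agent $1$'s, so a priori $c_1(d')$ could be as large as agent $1$'s most expensive chore and ruin the bound; the $2$-ratio assumption is the lever that keeps $c_1(d')$ within twice the cheapest chore of $X_2$, rescuing feasibility. The other three checks require only additivity, monotonicity, and the tEFX-feasibility already assumed for $X$, so I anticipate nothing beyond routine bookkeeping there.
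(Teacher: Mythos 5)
Your proposal is correct and takes essentially the same route as the paper: the crux in both is the $X'_2$-versus-$X'_1$ comparison at the $c_1$-cheapest chore $q$ of $X_2$, where additivity reduces feasibility to $c_1(X_2) - c_1(X_1) \le 2c_1(q) - c_1(d')$ and the $2$-ratio bound supplies $c_1(d') \le 2c_1(q)$. Your remaining three checks are just more explicit versions of the monotonicity observation the paper dispatches in one line for $X_1 \setminus d'$.
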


\begin{proof}
The statement clearly holds for $X_1 \setminus d'$: since $X_1$ itself was tEFX-feasible against $X_2$ and $X_3$, it follows that $X_1\setminus d'$ remains tEFX-feasible against $X_2$ and $X_3 \cup d'$. 

Now we prove $X'_2=X_2$ is tEFX-feasible for agent 1. Assume otherwise, so that $X_2 - q >_1 X_1 - d' + q$, then
\begin{equation}
    c_1(X_2) - c_1(q) > c_1(X_1) - c_1(d') + c_1(q); 
\end{equation}
and therefore,
\begin{equation}
    c_1(X_2) > c_1(X_1) + 2 c_1(q) - c_1(d') \geq c_1(X_1),
\end{equation}
where the last inequality holds because of \Cref{assum:c1c2ratiobound}, and this is a contradiction on the assumption that $X_1 >_1 X_2$.
\end{proof}

The only remaining problem is the tEFX-feasibility of $X'_3$ for agent 2. We analyze this problem via the cost of $X'_3$ from agent 2's perspective.

The first and simplest case is when $X_3 \cup d' <_2 \min_2 \left\{ X_1 \setminus d', X_2 \right\}$, then $X_3 \cup d'$ is indeed tEFX-feasible for agent 2 and we are done because $X'$ satisfies the two invariants. 

Therefore, from now on, we assume $X_3 \cup d' >_2 \min_2 \left\{ X_1 \setminus d', X_2 \right\}$. If $X_1 \setminus d' <_2 X_2$, then we also have $X_1 \setminus d' <_2 X_3 \cup d'$. Then, by the definition of $d'$, it follows that $X_1$ is tEFX-feasible for agent 2, which contradicts one of our earlier assumptions that $X_3$ is the only tEFX-feasible bundle for 2. So, we may safely assume $X_1 \setminus d' >_2 X_2$. Note that, in this case, $X_2$ is the most favorable bundle for agent 2 within $X'$ and hence tEFX-feasible for agent 2. Therefore, if we prove that $X_3 \cup d'$ is also tEFX-feasible for agent 2 in this case, then the algorithm terminates because $X'$ is an tEFX allocation, and our next proposition shows this is indeed the case.

\begin{proposition} \label{prop:X3istEFXwhenX2Lowest}
If $X_2 <_2 X_1 \setminus d'$, then $X_3 \cup d'$ is tEFX-feasible for agent 2 in $X'$.
\end{proposition}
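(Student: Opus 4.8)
The plan is to verify the tEFX-feasibility of $X_3\cup d'$ for agent $2$ directly from the additivity of $c_2$ and the ratio bound in \Cref{assum:c1c2ratiobound}. Recall that $X_3\cup d'$ being tEFX-feasible for agent $2$ means that, for every chore $b\in X_3\cup d'$ and every other bundle $Z\in\{X_1\setminus d',\,X_2\}$, we have $c_2\big((X_3\cup d')\setminus b\big)\le c_2(Z\cup b)$. Since we are in the case $X_2<_2 X_1\setminus d'$, the bundle $X_2$ is the cheaper of the two targets for agent $2$, so (by additivity of $c_2$) the constraint against $X_2$ dominates the one against $X_1\setminus d'$; it therefore suffices to check $c_2\big((X_3\cup d')\setminus b\big)\le c_2(X_2\cup b)$ for all $b\in X_3\cup d'$. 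First I would rewrite this, using additivity, as the scalar inequality $c_2(X_3)+c_2(d')-c_2(X_2)\le 2c_2(b)$, and observe that the hardest instance is the one minimizing the right-hand side, namely $b=b^*:=\argmin_{b\in X_3\cup d'}c_2(b)$, so the whole claim reduces to $c_2(X_3)+c_2(d')-c_2(X_2)\le 2c_2(b^*)$.

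The key step is to bound the left-hand side. From the invariant, $X_3$ is agent $2$'s favorite bundle in the allocation $X$, so $c_2(X_3)<c_2(X_2)$; hence $c_2(X_3)+c_2(d')-c_2(X_2)<c_2(d')$, and it remains only to show $c_2(d')\le 2c_2(b^*)$. Let $c_{\min}:=\min_{b\in M}c_2(b)$ and $c_{\max}:=\max_{b\in M}c_2(b)$. Since $d'$ and $b^*$ are individual chores, $c_2(d')\le c_{\max}$ and $c_2(b^*)\ge c_{\min}$, while \Cref{assum:c1c2ratiobound} gives $c_{\max}\le 2c_{\min}$; chaining these yields $c_2(d')\le c_{\max}\le 2c_{\min}\le 2c_2(b^*)$. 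Combining the two bounds gives $c_2(X_3)+c_2(d')-c_2(X_2)<c_2(d')\le 2c_2(b^*)\le 2c_2(b)$ for every $b\in X_3\cup d'$, which is exactly the desired inequality.

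This shows $X_3\cup d'$ is tEFX-feasible for agent $2$ in $X'$; combined with \Cref{prop:tEFX-both-cost-ratio} it follows that $X'$ satisfies the invariant and is in fact a complete tEFX allocation, so the loop terminates. I expect no serious obstacle here, as the argument is essentially bookkeeping; the one place to be careful is the reduction to the single worst chore $b^*$ and the single cheaper target $X_2$, together with the correct use of the strict invariant inequality $c_2(X_3)<c_2(X_2)$ to cancel the bundle-size terms. It is precisely this cancellation that collapses everything to the per-chore ratio bound $c_2(d')\le 2c_2(b^*)$, which is where \Cref{assum:c1c2ratiobound} does the essential work.
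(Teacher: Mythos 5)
Your proof is correct and follows essentially the same route as the paper's: reduce the tEFX check for $X_3\cup d'$ to the cheapest chore in that bundle measured against the cheaper target $X_2$, use the invariant $c_2(X_3)<c_2(X_2)$, and close with the $2$-ratio bound comparing $c_2(d')$ to the smallest chore. Your direct version is in fact marginally cleaner, since the global bound $c_2(d')\le 2c_2(b)$ for every chore $b$ lets you skip the paper's parametrization $c_2(X_2)=c_2(X_3)+\delta\,c_2(d')$ and its separate observation that the minimizing chore $f$ differs from $d'$.
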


\begin{proof}
Observe the inequality $X_3 <_2 X_2 <_2 X_3 \cup d'$, where the first part holds because $X_3$ is tEFX-feasible for agent 2 but neither $X_1$ nor $X_2$ is within $X$, and the second part holds because $X_3 \cup d' >_2 \min_2 \left\{ X_1 \setminus d', X_2 \right\} = X_2$. Therefore, we may write $c_2(X_2) = c_2(X_3) + \delta \cdot c_2(d')$ for some $\delta \in (0, 1)$. Define $f = \argmin_{x \in X_3 \cup d'} c_2(x)$, and suppose $X_3 \cup d'$ is not tEFX-feasible for agent 2 in $X'$, then we must have
\begin{equation}
    X_3 + d' - f >_2 \min_2 \left\{ X_1 - d' + f, X_2 + f \right\} =_2 X_2 \cup f,
\end{equation}
which implies
\begin{equation}
    c_2(X_3) + c_2(d') - c_2(f) > c_2(X_2) + c_2(f).
\end{equation}
Note that $f \neq d'$: The other two bundles in $X'$ adding $d'$ are $X'_1 \cup d' = X_1$ and $X_2 \cup d'$, which are both more costly than $X'_3 \setminus d' = X_3$. Hence, $f <_2 d'$ and
\begin{align}
    c_2(X_3) & > c_2(X_2) +2c_2(f) - c_2(d')\\
    & = c_2(X_3) + \delta \cdot c_2(d') + 2c_2(f) - c_2(d')\\
    & = c_2(X_3) + 2c_2(f) - (1-\delta) \cdot c_2(d').
\end{align}
Therefore, $2c_2(f) - (1-\delta) \cdot c_2(d')<0$, implying
\begin{equation}
    c_2(f) < \frac{1-\delta}{2} \cdot c_2(d') < \frac{1}{2} \cdot c_2(d') \Longrightarrow \frac{c_2(d')}{c_2(f)} > 2.
\end{equation}
But by \Cref{assum:c1c2ratiobound}, this is not possible.
\end{proof}

We argued earlier that in this case, $X_2$ is tEFX-feasible for agent 2, and \Cref{prop:X3istEFXwhenX2Lowest} shows that $X'_3$ is also tEFX-feasible for agent 2. Together with \Cref{prop:tEFX-both-cost-ratio}, $X'$ is an tEFX allocation by letting agent 3 pick her most favorable bundle. Consequently, the algorithm terminates.

\subsection{Time Complexity of the Algorithm}

We now show that the time complexity of our algorithm in this section is $O(m)$, where $m$ is the total number of chores. The input of the algorithm is an EFX allocation $X = \left( X_1, X_2, X_3 \right)$ with respect to agent 1's cost function, and certainly $X_1 \cup X_2$ contains at most $m$ elements.

Recall that the algorithm in Section 4 involves chore transfer among all three agents: In Case 1, $X_1$ and $X_3$ exchange chores, and in Case 2, $X_3$ always receives a chore, and $X_1$ and $X_2$ may exchange. Moreover, in our analysis, it is difficult to tell when that algorithm terminates, and the algorithm therefore takes pseudo-polynomial time.

In contrast, for the algorithm in Section 5, the only receiving party is $X_3$, which is a fixed bundle, unlike $X_1$ and $X_2$ that may shift positions depending on agent 1's cost on them. In other words, during the entire process one of the three bundles only receive chores from the other two parties without sending anything, and similarly the other two parties only gives out chores. Therefore, the algorithm must terminate no later than all chores from $X_1 \cup X_2$ are transferred, which takes at most $m$ iterations.

Now we analyze the time complexity within each iteration. During each time, the algorithm first selects agent 2's most favorable chore from agent 1's less favorable bundle between $X_1$ and $X_2$. Again, like in the main part, we name this chore as $d'$. Both tasks can be done quickly by previously recording the costs of chores and computes the total costs of the three initial bundles. Next, $d'$ is transferred to $X_3$ from one of $X_1$ or $X_2$, and the protocol computes the costs of $X_1 \setminus d'$, $X_2$, and $X_3 \cup d'$, which again can be done in constant time. As we argued earlier, the entire process takes at most $m$ iterations. Therefore, in addition to the envy-cycle elimination algorithm to find the initial partition, which takes $O(mn^3)$, where $n$ is the number of agents; in our case, $n=3$ so it takes only $O(m)$.

Together with our earlier results in this section, we have shown that when two of the cost functions are both additive and 2-ratio-bounded, then a tEFX allocation of chores exist and can be computed in polynomial time, and thus, \Cref{thminsec5} holds.

\section{Conclusion and Future Work}

In this paper, we have shown that:
\begin{enumerate}
    \item An EFX allocation always exists when there are three agents, and among them, two of them have additive cost functions, and their cost functions are further assumed to be collective and have an identical ordering on the chores; and
    \item A tEFX allocation always exists when there are three agents, and two of them have additive and 2-ratio-bounded cost functions. Moreover, this tEFX allocation can be computed efficiently in time $O(m)$.
\end{enumerate}

We believe our work takes us one step closer to the solution of the more difficult problem, namely whether an EFX allocation always exists with three agents with more general cost functions. 

While the third agent's cost function can be arbitrary monotone function, 
our proofs extensively exploit the assumptions on $c_1$ and $c_2$, and unfortunately do not work for the case when the first and second agents have general additive cost functions. Therefore, a natural next step is to investigate the case with less restrictive first two cost functions. We anticipate the difficulty of the case when all three cost functions are generally additive, so another option might be to add more restrictions on the third agent, relax some assumptions on the first two agents, and explore this case.

Even if we confirmed the existence of an EFX allocation with certain restrictions, the algorithm is pseudo-polynomial, since the decrease of the potential after an iteration may be small. It is also worthwhile to investigate whether a polynomial-time protocol exists under a similar or less restrictive setting.

\section*{Acknowledgments}

We thank Dr. Bhaskar Ray Chaudhury for helpful discussions on fair division and related subjects.

\bibliography{refs}

\end{document}